\newcommand{\norm}[1]{\left\lVert #1 \right\rVert}
\newcommand{\R}{\mathbb{R}}
\newcommand{\N}{\mathbb{N}}
\newtheorem{theorem}{Theorem}
\newtheorem{proposition}{Proposition}
\DeclareMathOperator{\diag}{diag}
\theoremstyle{definition}
\newtheorem{definition}{Definition}
\theoremstyle{remark}
\newtheorem*{remark}{Remark}
\title{\LARGE \bf Exploiting Over-The-Air Consensus for Collision Avoidance and Formation Control in Multi-Agent Systems}
\author{Michael Epp, Fabio Molinari, Jörg Raisch%
\thanks{Michael Epp, Fabio Molinari and Jörg Raisch are with the Control Systems Group, Technische Universität Berlin, Germany. Email: {\tt\small michael.epp@tu-berlin.de}, {\tt\small molinari@tu-berlin.de}}%
\thanks{Jörg Raisch is also with Science of Intelligence (SCIoI), Research Cluster of Excellence, Berlin, Germany. Email: {\tt\small raisch@control.tu-berlin.de}}
\thanks{This work was funded by the Federal Ministry of Education and Research of Germany joint project 6G-RIC, project identification number 16KISK030.}
}
\begin{document}
\maketitle

\begin{abstract}
This paper introduces a distributed control method for multi-agent robotic systems employing Over the Air Consensus (OtA-Consensus). 
Designed for agents with decoupled single-integrator dynamics, this approach aims at efficient formation achievement and collision avoidance. 
As a distinctive feature, it leverages OtA's ability to exploit interference in wireless channels, a property traditionally considered a drawback, thus enhancing communication efficiency among robots. 
An analytical proof of asymptotic convergence is established for systems with time-varying communication topologies represented by sequences of strongly connected directed graphs. 
Comparative evaluations demonstrate significant efficiency improvements over current state-of-the-art methods, especially in scenarios with a large number of agents.
\end{abstract}

\section{Introduction}
In recent years, autonomous multi-agent systems have become increasingly important in various scientific and engineering fields. The distributed control of these systems, which involves developing control algorithms and analyzing the behaviors that emerge, especially in vehicle control, has attracted much research attention \cite{paper:ren}, \cite{paper:gulzar}.

A key area of interest in these studies is \emph{formation control}, where agents need to form a specific shape or arrangement. To do this effectively, they often have to agree on a central point that represents the formation's focus. Distributed consensus protocols are used to achieve this, allowing agents to share and align information based on the limitations of their communication network \cite{paper:molinari}, \cite{paper:falconi}.

In addition to achieving formation, these systems usually have additional objectives like collision avoidance. This involves ensuring that agents always keep a safe distance from each other. Such requirements introduce complex interactions that call for not just the right consensus protocol but also a specialized control strategy. This paper focuses on using artificial potential fields for collision avoidance, a common method to prevent collisions \cite{paper:sabattini}, \cite{paper:toyota}.

Efficient energy use and communication management are often critical in these systems. Traditional formation control often neglects the communication model or depends on direct agent-to-agent communication, which can be energy-intensive in larger or densely populated networks \cite{paper:falconi}, \cite{paper:yi}, \cite{paper:almeida}. An alternative is leveraging the superposition property of wireless signals for more efficient broadcasting, a concept known as Over-the-Air (OtA) computation. This approach, identified as a promising candidate technology for 6G wireless communication systems~\cite{paper:wang}, turns traditional interference challenges into a communication advantage \cite{paper:zheng}, \cite{paper:molinari2}. 
Based on \cite{paper:molinari}, this paper proposes a control strategy that utilizes OtA broadcasting benefits while avoiding inter-agent collisions, thereby extending the work in~\cite{paper:molinari} to ensure a safe operation.

The paper is structured as follows. Section~\ref{sec:notation} summarizes notation. 
The formation control problem is defined in Section~\ref{sec:pre}. 
In Section~\ref{sec:comm_model}, the broadcast protocol is introduced. 
A control strategy is proposed in Section~\ref{sec:ctrl}. 
The system is analyzed, and its convergence properties are shown in Section~\ref{sec:convergence}.
Simulation results are shown in Section~\ref{sec:sim}, and Section~\ref{sec:conclusion} contains final remarks.
\section{Notation} \label{sec:notation}
Throughout this paper, $\R$, $\R_{>0}$, and $\R_{\geq 0}$ denote the sets of real, positive real and nonnegative real numbers, respectively. 
$\N_0$ and $\N$ will denote the nonnegative and positive integers, respectively. 
$\mathbb{I}_n$ is the identity matrix of size $n\times n$, while $1_n$ is a column vector of size $n$ with every element equal to $1$.

Given a matrix $A$, its transpose is written as $A^T$.
The entry in position $(i,j)$ of matrix $A$ is denoted $[A]_{ij}$. 
A matrix $A$ is positive, respectively nonnegative, if $\forall (i,j),\, [A]_{ij}>0$, respectively $[A]_{ij}\geq 0$. 
A square nonnegative matrix $A$ is called reducible if there exists a permutation matrix $P$ such that $PAP^T$ is of upper block triangular form. 
If $A$ is not reducible, it is \emph{irreducible}.
A nonnegative matrix $A$ is \emph{primitive} if $\exists h\in\N$ such that $A^h$ is positive.

A \emph{directed graph} $\mathcal{G}$ is a pair $(\mathcal{N}, \mathcal{A})$, where $\mathcal{N}$ is the set of nodes and $\mathcal{A}\subseteq \mathcal{N}\times\mathcal{N}$ is the set of arcs. 
$\mathcal{A}$, i.e. $(i,j)\in\mathcal{A}$ if and only if an arc goes from node $i\in\mathcal{N}$ to node $j\in\mathcal{N}$. 
The set containing all neighbors of agent $i$ is defined as 
$\mathcal{N}_i:= \left\{ j\in\mathcal{N} \,\vert\, (j,i)\in\mathcal{A}  \right\}$. A path from node $i$ to node $j$ is a sequence of arcs 
\begin{equation}
    (l_0,l_1),(l_1,l_2),\dots,(l_{p-1},l_p),
\end{equation}
with $p\in\N$, $l_0=i$ and $l_p=j$. 
The graph $\mathcal{G}$ is strongly connected if $\forall i,j\in\mathcal{N}$ there exists a path from node $i$ to node $j$.
A weighted graph is a triple $\mathcal{G} = (\mathcal{N}, \mathcal{A}, w)$, where $w:\mathcal{A}\to\R_{>0}$ assigns a positive weight to each arc.
The matrix $\mathcal{W}\in\R_{\geq 0}^{n\times n}$, with $[\mathcal{W}]_{ji}=w((i,j))$ if $(i,j)\in\mathcal{A}$ and $[\mathcal{W}]_{ji}=0$ otherwise,  is called the adjacency matrix of $\mathcal{G}$.
In this paper, we use weighted directed graphs to model the communication topology in multi-agent systems.
We allow changing topologies, meaning that the arc set and the weight function of the graph can change at discrete points in time $t_k$, $k\in\N_0$, and we write $\mathcal{G}(t_k) = (\mathcal{N}, \mathcal{A}_k, w_k)$.
\section{Preliminaries} \label{sec:pre}
Let $\mathcal{N}=\{1,\dots,n\}$, $n\in\N$, be a set of autonomous agents moving on a two dimensional plane with a decoupled single integrator dynamics, i.e.,
\begin{equation} \label{eq:sys_dynamics}
\forall i\in\mathcal{N}, \quad \dot{p}_i(t) = u_i(t),
\end{equation}
where $p_i(t)\in\R^2$ is the $i$th agents position and $u_i(t)$ its input. 
The agents exchange information at times $t_k\in\R_{\geq 0}, k\in\N_0$, such that
\begin{equation}
\exists\underline{\Delta},\bar{\Delta}\in\R_{> 0}:\quad\forall k\in\N_0,~ \underline{\Delta} \leq t_{k+1}-t_k \leq \bar{\Delta}.
\end{equation}
The control scheme will be designed to let the system reach a stationary formation. This happens when,
\begin{equation} \label{eq:cond_convergence}
    \forall i\in\mathcal{N}, \quad \lim_{t\to\infty} p_i(t) = \bar{p}^*+d_i,
\end{equation}
where $\bar{p}^*\in\R^2$ is the so-called centroid of the formation and $d_i\in\R^2$ is the desired displacement vector of agent $i$.
    
Each agent is equipped with an underlying object detection system. 
We will use this to ensure a minimum distance $\delta_s\in\R_{>0}$ between any pair of agents.
We will achieve this by a mechanism (described later) that actively steers agents away from each agent closer than $\delta_c>\delta_s$.

Let $\norm{l_{ij}(t)} := \norm{p_i(t) - p_j(t)}$ be the distance between agents $i$ and $j$. 
We assume the desired formation to be well-posed, i.e., $\forall i,j\in\mathcal{N}$,
$$ \norm{d_i-d_j} > \delta_c.$$
Let's also define the set of agents that are not in danger of colliding at time $t$ as
\begin{equation} \label{eq:cond_collision}
    \mathcal{D}_c(t) := \left\{ i\in\mathcal{N} \mid \forall j\in\mathcal{N}\setminus\{i\},~\norm{l_{ij}(t)} > \delta_c \right\}.
\end{equation}
We generalize this notion to time intervals $[t_1,t_2]$ by
\begin{equation}
    \mathcal{D}_c(t_1,t_2) := \left\{ i\in\mathcal{N} \mid\forall t\in[t_1,t_2],~i\in\mathcal{D}_c(t) \right\}.
\end{equation}
\section{Communication Model} \label{sec:comm_model}
As in~\cite{molinari2021max}, this study adopts the so-called 
\emph{Wireless Multiple Access Channel (WMAC)} to model
the superposition property (interference) of the wireless medium, i.e., 
the effect of multiple signals being simultaneously
transmitted in the same frequency band. 
Interference has traditionally been avoided by using more wireless resources, e.g., by creating orthogonal transmission (each agent is assigned its own time or frequency slot). 
\begin{definition}[Wireless Multiple Access Channel (WMAC)]
    Agents in set $\mathcal{N}$ simultaneously broadcast $\bar{\mu}_j\in\mathbb{R}$. The obtained superimposed value at the receiver is
    \begin{equation}
        \bar{\nu}_i = \sum_{j\in\mathcal{N}_i} \xi_{ij}\bar{\mu}_j,
    \end{equation}
    where $\xi_{ij}\in\mathbb{R}_{>0}$ are unknown time-varying (fading) channel coefficients; they are assumed to be a realization of a stochastic process, as in~\cite{molinari2021max}, where they are also shown to be positive
    as resulting from power modulation techniques.
\end{definition}

To deal with unknown channel coefficients,~\cite{paper:molinari2} proposed to additionally broadcast a known value, e.g., $1$ via an orthogonal channel (i.e., in a different time slot or frequency range).
In our multi-agent scenario, each agent orthogonally broadcasts the entries of a two-dimensional vector and the known value $1$.
If we use TDMA (Time Division Multiple Access), we assume that the delays between the three broadcasts are so small that they can be considered to occur at the same time $t_k$.

According to the WMAC model, all agents $i\in\mathcal{N}$ then receive 
\begin{align}
    \nu_i(t_k) &= \sum_{j\in\mathcal{N}_i} \xi_{ij}(k)\mu_j(t_k) \quad\in\R^2\\
    \nu_i'(t_k) &= \sum_{j\in\mathcal{N}_i} \xi_{ij}(k)(t_k).
\end{align}
In the following, we will assume that, $\forall i\in\mathcal{N}, \forall k\in\N_0, i\in\mathcal{N}_i$, i.e., $\xi_{ii}>0$.
Therefore $\nu_i'(t_k)$ is positive and
\begin{equation}
    \zeta_i(t_k) \coloneqq \frac{\nu_i(t_k)}{\nu_i'(t_k)}
\end{equation}
is well defined. Clearly,
\begin{equation}
    \zeta_i(t_k) = \sum_{j\in\mathcal{N}} h_{ij}(k) \mu_j(t_k),
\end{equation}
where
\begin{equation}
    h_{ij}(k) = \begin{cases}
        \frac{\xi_{ij}(k)}{\sum_{j\in\mathcal{N}_i} \xi_{ij}(k)} & \text{if } (j,i)\in\mathcal{A}(k) \\
        0 & \text{otherwise}
    \end{cases}
\end{equation}
are the normalized channel coefficients.

By construction, $h_{ij}(k)\in[0,1]$, and
\begin{equation}
    \sum_{j\in\mathcal{N}} h_{ij}(k) = 1.
\end{equation}
Hence, $\zeta_i(t_k)$ is in the convex hull of the $\mu_j(t_k)$, $j\in\mathcal{N}$.
We refer to $\zeta_i(t_k)$ as the over-the-air variable of agent $i$.
It encapsulates aggregated (superimposed) information from neighboring agents.
\section{Control Strategy} \label{sec:ctrl}
With the definitions from Section~\ref{sec:pre} and the communication model from Section~\ref{sec:comm_model} at hand, we introduce an auxiliary state variable $\vartheta_i(t)\in\R^2$, which serves as reference for position (similarly to what was done in \cite{paper:molinari}). The controlled system consists of flow and jump dynamics, accounting for (i) the continuous nature of movements in the plane and (ii) the discrete-time nature of communication.

\subsection{Flow Dynamics}
Note that, $\forall t\in[t_k,t_{k+1}], \mathcal{D}_c(t_k,t)\subseteq\mathcal{D}_c(t)\subseteq\mathcal{N}$.
To discuss the flow dynamics, we distinguish three cases:
\begin{itemize}
    \item $i\in\mathcal{D}_c(t_k,t)$, i.e., agent $i$ has not been in danger of collision since the last communication update
    \item $i\in\mathcal{D}_c(t) \setminus \mathcal{D}_c(t_k,t)$, i.e., agent $i$ is currently not in danger of collision, but was so after the last communication update
    \item $i\in\mathcal{N}\setminus\mathcal{D}_c(t)$, i.e., agent $i$ is currently in danger of collision.
\end{itemize}

Consider the following flow dynamics, $\forall t\in(t_k,t_{k+1}]$,
\begin{align}
    &\dot{p}_i(t) = u_i(t) \nonumber \\
    &= \begin{cases}
            -a(p_i(t)-\vartheta_i(t))&\text{if }i\in\mathcal{D}_c(t_k,t)
            \\
            -\cfrac{p_i(\tau_t^i)-\vartheta_i(\tau_t^i)}{t_{k+1}-\tau_t^i} & \text{if } i\in\mathcal{D}_c(t) \setminus \mathcal{D}_c(t_k,t)\\
            r_i(t)-a(p_i(t)-\vartheta_i(t))&\text{if }i\in\mathcal{N}\setminus\mathcal{D}_c(t)
        \end{cases}
    \label{eq:sys_flow_p}
    \\
    &\dot{\vartheta}_i(t) = 0,
    \label{eq:sys_flow_theta}
\end{align}
where $a\in\R_{>0}$ and, $\forall i\in\mathcal{D}_c(t)\setminus\mathcal{D}_c(t_k,t)$,
\begin{equation}
    \tau_t^i = \sup_{\substack{\tilde{t}<t \\ i\notin\mathcal{D}_c(\tilde{t})}} \tilde{t}
\end{equation}
is the most recent time where agent $i$ was in danger of colliding, and $r_i(t)\in\R^2$ is a collision avoidance term. 
We define the latter as
\begin{align}
    r_i(t)
    &= -\frac{\partial}{\partial p_i} \frac{1}{2} \sum_{i=1}^n \sum_{\substack{j=1\\j\neq i}}^n \rho_{ij}(t)
    \label{eq:ri} \\
    &= -\frac{\partial}{\partial p_i} \rho(t),
\end{align}
where 
\begin{align}
    &\rho_{ij}(t) \coloneqq 
    \begin{cases}
            \frac{\delta_c \left( \delta_c-\delta_s \right)^2}{\norm{l_{ij}(t)}-\delta_s} + \frac{\norm{l_{ij}(t)}^2}{2} &\text{if } \delta_s<\norm{l_{ij}(t)}\leq\delta_c \\
            \quad- \frac{3\delta_c^2}{2} + \delta_c\delta_s &\\
            \infty &\text{if } \norm{l_{ij}(t)}\leq\delta_s \\
            0 &\text{otherwise}
    \end{cases}
    \label{eq:rho_ij_def}
\end{align}
can be interpreted as a potential field variable.
Clearly, (\ref{eq:rho_ij_def}) is decreasing over $\norm{l_{ij}}\in(\delta_s,\delta_c]$.

\subsection{Jump Dynamics}
According to the discussion in Section~\ref{sec:comm_model}, we let agents broadcast $\forall k\in\N_0$
\begin{equation}
    \mu_i(t_k) = \begin{cases}
        p_i(t_k)-d_i
        &\text{if } i\in\mathcal{D}_c(t_k) \\
        \vartheta_i(t_k)-d_i
        &\text{otherwise}.
    \end{cases}
    \label{eq:mu_transmit}
\end{equation}
Each agent then receives 
\begin{align}
    \zeta_i(t_k) = &\sum_{j\in\mathcal{D}_c(t_k)} h_{ij}(t_k)\left(p_j(t_k)-d_j\right) \nonumber\\
    &+ \sum_{j\in\mathcal{N}\setminus\mathcal{D}_c(t_k)} h_{ij}(t_k)\left( \vartheta_j(t_k)-d_j \right). \label{eq:zeta}
\end{align}
With this, we define the jump dynamics to be
\begin{align}
    p_i(t_k^+) &= p_i(t_k)
    \label{eq:sys_jump_p}
    \\
    \vartheta_i(t_k^+) &= \zeta_i(t_k) + d_i
    \label{eq:sys_jump_theta}
\end{align}

\begin{remark}
    In what follows, the distributed dynamic system described
    by (\ref{eq:sys_flow_p})-(\ref{eq:sys_flow_theta}) and (\ref{eq:sys_jump_p})-(\ref{eq:sys_jump_theta}) is referred to as the \textit{jump-flow system}.
\end{remark}
A couple of properties are now presented, which will be of
use in the following section to prove convergence.
\begin{proposition}[Collision Avoidance]
    If $\forall i\in\mathcal{N}$, $i\in\mathcal{D}_c(0)$, then the jump-flow system
    has no collisions.
    \begin{proof}
        If $\norm{l_{ij}(t)}$ approaches the critical distance $\delta_s$, $\rho_{ij}(t)$ grows beyond all bounds. 
        The norm of the repulsive term $r_i(t)$ will then also grow beyond all bounds, effectively preventing any two agents to be within distance $\delta_s$.
    \end{proof}
\end{proposition}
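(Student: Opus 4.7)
The plan is to formalize the author's heuristic as a proof by contradiction, using the total potential $\rho$ as a Lyapunov-like function. Suppose, for contradiction, that
\[
t^{\star} := \inf\{ t>0 \mid \exists i\neq j,\; \norm{l_{ij}(t)}\leq\delta_s \}
\]
is finite. Since $p_i$ is continuous along the flow and unchanged at communication instants by~\eqref{eq:sys_jump_p}, and the hypothesis combined with the well-posedness assumption gives $\norm{l_{ij}(0)}>\delta_c>\delta_s$ for every pair, we have $t^{\star}>0$ and some pair $(i,j)$ with $\norm{l_{ij}(t)}\to\delta_s$ as $t\uparrow t^{\star}$. By the singular branch of~\eqref{eq:rho_ij_def}, this forces $\rho(t)\to\infty$, so the proof reduces to showing $\rho$ stays bounded on $[0,t^{\star})$.

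Along the flow, only agents in $\mathcal{N}\setminus\mathcal{D}_c(t)$ contribute to $\dot\rho$: for $i\in\mathcal{D}_c(t)$, $\norm{l_{ij}}>\delta_c$ for every $j$, and a direct check of~\eqref{eq:rho_ij_def} shows that $\rho_{ij}$ and its radial derivative both vanish at $\norm{l_{ij}}=\delta_c$, hence $\nabla_{p_i}\rho=0$. Substituting the third branch of~\eqref{eq:sys_flow_p} into $\dot\rho=\sum_i\nabla_{p_i}\rho\cdot\dot p_i$ and applying Young's inequality yields
\[
\dot\rho \;\leq\; -\tfrac{1}{2}\!\sum_{i\in\mathcal{N}\setminus\mathcal{D}_c(t)}\!\norm{\nabla_{p_i}\rho}^2 + \tfrac{a^2}{2}\!\sum_{i\in\mathcal{N}\setminus\mathcal{D}_c(t)}\!\norm{p_i-\vartheta_i}^2.
\]
At each $t_k$, positions are unchanged by~\eqref{eq:sys_jump_p}, so $\rho$ does not jump either, and $\rho(t)-\rho(0)$ is upper-bounded by the time integral of the right-hand side on $[0,t]$.

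The step I expect to be the main obstacle is a uniform bound on $\norm{p_i(t)-\vartheta_i(t)}$ over $[0,t^{\star}]$. By~\eqref{eq:zeta} each jump of $\vartheta_i$ produces a convex combination of the shifted agent states at $t_k$, so $\norm{\vartheta_i(t_k^+)}$ is controlled by the past $\norm{p_j(t_k)}$ and $\norm{\vartheta_j(t_k)}$; between communications, the first and third branches of~\eqref{eq:sys_flow_p} give dissipative tracking dynamics with a bounded extra gradient contribution, while the second (catch-up) branch moves $p_i$ strictly on the segment from $p_i(\tau_t^i)$ to $\vartheta_i(\tau_t^i)$. Since $t_{k+1}-t_k\geq\underline{\Delta}>0$, the interval $[0,t^{\star}]$ contains only finitely many communication instants, and iterating these bounds yields a finite $M$ with $\norm{p_i(t)-\vartheta_i(t)}\leq M$ on $[0,t^{\star}]$. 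Substituting into the previous estimate gives $\rho(t)\leq\rho(0)+\tfrac{n a^2 M^2}{2}\,t^{\star}<\infty$, contradicting $\rho(t)\to\infty$ and proving the claim.
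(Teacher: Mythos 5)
Your skeleton goes considerably further than the paper's own proof, which is only the two-sentence heuristic that $\rho_{ij}$ and hence $\norm{r_i}$ blow up as $\norm{l_{ij}}\downarrow\delta_s$; your contradiction setup at the first unsafe time $t^{\star}$, the observation that $\rho$ does not jump because of \eqref{eq:sys_jump_p}, and the $C^1$ matching of \eqref{eq:rho_ij_def} at $\norm{l_{ij}}=\delta_c$ (value and radial derivative both vanish there, so only agents in $\mathcal{N}\setminus\mathcal{D}_c(t)$ contribute to $\dot\rho$) are all correct. But the step you yourself flag as the main obstacle is a genuine gap, and as sketched it is circular. Your uniform bound $\norm{p_i(t)-\vartheta_i(t)}\leq M$ rests on the claim that the third branch of \eqref{eq:sys_flow_p} is dissipative tracking ``with a bounded extra gradient contribution'' --- but $r_i=-\nabla_{p_i}\rho$ is exactly the quantity that is unbounded in the scenario under refutation: as $t\uparrow t^{\star}$ some $\norm{l_{ij}}\downarrow\delta_s$ and $\norm{r_i}\to\infty$, while $\frac{d}{dt}\frac{1}{2}\norm{p_i-\vartheta_i}^2=(p_i-\vartheta_i)^T r_i - a\norm{p_i-\vartheta_i}^2$ has a sign-indefinite cross term you cannot control without already controlling $\rho$. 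So your two estimates feed each other: the Young-inequality bound on $\dot\rho$ needs $M$, and $M$ needs $r_i$, i.e.\ $\rho$, bounded. Finiteness of the number of update times does not rescue this, because the putative blow-up occurs along the flow, not at jumps.

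The fix is to couple the two quantities rather than estimate them separately: take $V(t)=\rho(p(t))+\frac{a}{2}\sum_i\norm{p_i(t)-\vartheta_i(t)}^2$, with $\vartheta_i$ constant between updates by \eqref{eq:sys_flow_theta}. For $i\in\mathcal{N}\setminus\mathcal{D}_c(t)$ the dangerous cross terms cancel exactly, leaving a contribution $-\norm{\dot p_i}^2$; for the first branch one gets $-a^2\norm{p_i-\vartheta_i}^2$; and branch-two agents have $\nabla_{p_i}\rho=0$ and move along the segment toward $\vartheta_i$, contributing $-\frac{a\alpha_i(t)}{t_{k+1}-\tau_t^i}\norm{p_i(\tau_t^i)-\vartheta_i(\tau_t^i)}^2\leq 0$. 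Hence $\dot V\leq 0$ between updates --- this is precisely the cancellation the paper itself exploits later in \eqref{eq:lyapunov_dot}, with the frozen $\vartheta_i$ playing the role of $p_i^*$. At the finitely many update times in $[0,t^{\star}]$ (at most $t^{\star}/\underline{\Delta}+1$ of them), $\rho$ is continuous since $p$ is, and by \eqref{eq:zeta} the jump of $\tilde\vartheta_i$ lands in the convex hull of the $\tilde p_j(t_k)$ and $\tilde\vartheta_j(t_k)$, so each jump increases $V$ by an amount bounded inductively (between jumps, $\norm{p_i-\vartheta_i}\leq\sqrt{2V(t_k^+)/a}$ bounds all positions). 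This yields boundedness of $V$, hence of $\rho$, on $[0,t^{\star})$, and your contradiction with $\rho(t)\to\infty$ then goes through; the remainder of your argument can stand as written.
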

The following result shows that, if agent~$i$ was in danger of colliding during the interval $(t_{k},t_{k+1})$, then its position at $t_{k+1}$ coincides with its reference.
\begin{proposition}[Converging to the reference]
    \label{prop:p_tk_theta_tk}
    For~$i\in\mathcal{N}$, $i\in\mathcal{D}_c(t_{k+1}) \setminus \mathcal{D}_c(t_k,t_{k+1})$, $p_i(t_{k+1})=\vartheta_i(t_{k+1})$.
    \begin{proof}
        $i\in\mathcal{D}_c(t_{k+1}) \setminus \mathcal{D}_c(t_k,t_{k+1})$ implies that
        $$
            t_{k}<\tau_{t}^i<t_{k+1}.
        $$
        and 
        $$
            \forall t\in(\tau^i_t,t_{k+1}], i\in\mathcal{D}_c(t)\setminus\mathcal{D}_c(t_k,t).
        $$
        By this, (\ref{eq:sys_flow_p}) and (\ref{eq:sys_flow_theta}), one can see that
        $$
            p_i(t_{k+1}) = \vartheta(\tau^i_t) = \vartheta(t_{k+1}).
        $$
    \end{proof}
\end{proposition}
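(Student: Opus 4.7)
The plan is to exploit the fact that in the second case of \eqref{eq:sys_flow_p}, the right-hand side is designed so that it steers the position linearly toward the reference $\vartheta_i$, arriving at it exactly at time $t_{k+1}$. So the whole argument reduces to identifying the correct time interval on which this case applies and then doing a one-line integration.

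First, I would unpack the hypothesis $i\in\mathcal{D}_c(t_{k+1})\setminus\mathcal{D}_c(t_k,t_{k+1})$. This says that agent $i$ is safe at $t_{k+1}$ but was in danger at some moment within $(t_k,t_{k+1})$. Define $\tau^\ast := \tau_{t_{k+1}}^i$. By the definition of $\tau_t^i$, this is the last instant in $(t_k,t_{k+1})$ at which agent $i$ was in collision danger, so $t_k<\tau^\ast<t_{k+1}$, and for every $t\in(\tau^\ast,t_{k+1}]$ one has $i\in\mathcal{D}_c(t)\setminus\mathcal{D}_c(t_k,t)$, with $\tau_t^i=\tau^\ast$ constant on that interval (the sup in the definition stabilizes once no further danger occurs).

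Next, on $(\tau^\ast,t_{k+1}]$ the second branch of \eqref{eq:sys_flow_p} is active, giving
\begin{equation*}
\dot p_i(t) \;=\; -\,\frac{p_i(\tau^\ast)-\vartheta_i(\tau^\ast)}{t_{k+1}-\tau^\ast},
\end{equation*}
which is a constant vector because both $p_i(\tau^\ast)$ and $\vartheta_i(\tau^\ast)$ are fixed numbers once $\tau^\ast$ is fixed. Integrating from $\tau^\ast$ to $t_{k+1}$ and using continuity of $p_i$ at $\tau^\ast$ (only the control law switches there, not the state), I get
\begin{equation*}
p_i(t_{k+1}) \;=\; p_i(\tau^\ast) - (t_{k+1}-\tau^\ast)\cdot\frac{p_i(\tau^\ast)-\vartheta_i(\tau^\ast)}{t_{k+1}-\tau^\ast} \;=\; \vartheta_i(\tau^\ast).
\end{equation*}
Finally, since \eqref{eq:sys_flow_theta} gives $\dot\vartheta_i\equiv 0$ on the flow interval $(t_k,t_{k+1}]$ and no jump of $\vartheta_i$ occurs between $\tau^\ast$ and $t_{k+1}$, the reference is constant in time on $[\tau^\ast,t_{k+1}]$, so $\vartheta_i(\tau^\ast)=\vartheta_i(t_{k+1})$, and the claim follows.

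The only subtle point I anticipate is justifying that $\tau^\ast\in(t_k,t_{k+1})$ is attained and that the second branch of the flow law applies on the full half-open interval $(\tau^\ast,t_{k+1}]$ with $\tau_t^i$ equal to this single value; this is really just a careful reading of the supremum in the definition of $\tau_t^i$ together with the fact that $\mathcal{D}_c(t)$ is defined by strict inequalities. Once that bookkeeping is handled, the integration is immediate and the proposition follows.
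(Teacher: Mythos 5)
Your proof is correct and follows essentially the same route as the paper: identify that $\tau_t^i$ is attained in $(t_k,t_{k+1})$ and stays constant on $(\tau_t^i,t_{k+1}]$, integrate the constant-velocity second branch of \eqref{eq:sys_flow_p} to get $p_i(t_{k+1})=\vartheta_i(\tau_t^i)$, and use $\dot\vartheta_i\equiv 0$ to conclude $\vartheta_i(\tau_t^i)=\vartheta_i(t_{k+1})$. You merely spell out the integration and the bookkeeping on $\tau_t^i$ that the paper leaves implicit.
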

\section{Convergence} \label{sec:convergence}
We will now prove that the multi-agent system described in Section~\ref{sec:ctrl} under the communication scheme from Section~\ref{sec:comm_model} will indeed converge. For this, we will employ well-known theorems from nonnegative matrix theory and Lyapunov's second method for stability.

\begin{definition} \label{def:global_min}
    A multi-agent system as defined by (\ref{eq:sys_flow_p})-(\ref{eq:sys_flow_theta}) and~(\ref{eq:sys_jump_p})-(\ref{eq:sys_jump_theta}) achieves the desired formation if
    \begin{equation}
        \forall i\in\mathcal{N},\, \lim_{t\to\infty}\, (p_i(t)-d_i) = \bar{p}^*.
    \end{equation}
    This condition implies $\forall i,\, \dot{p}_i(t)=0$ for $t\to\infty$.
\end{definition}
\begin{definition} \label{def:local_min}
    A multi-agent system as defined by (\ref{eq:sys_flow_p})-(\ref{eq:sys_flow_theta}) and~(\ref{eq:sys_jump_p})-(\ref{eq:sys_jump_theta}) is convergent if
    \begin{equation}
        \forall i,\, \lim_{t\to\infty}\, \dot{p}_i(t)=0.
    \end{equation}
    Clearly, convergence is a necessary condition for achieving the desired formation.
\end{definition}

To simplify the following analysis of the system and enhance readability, we refer to the orthogonal coordinates of the plane as $x$ and $y$, and we denote the $x$-coordinate of a vector by a superscript $x$, e.g. the $x$-coordinate of vector $p_i(t)$ is $p_i^x(t)$. 
Additionally, we write
\begin{equation}
    p^x(t) = \left[ p_1^x(t),\dots,p_n^x \right]^T
\end{equation}
and
\begin{equation}
    p(t) = [p_1^T(t),\dots,p_n^T(t)]^T.
\end{equation}

\begin{theorem} \label{thm:convergence}
    Given a series of strongly connected graphs $\mathcal{G}(t_k)$, $k\in\N_0$, and a well-posed formation control problem, the system defined by (\ref{eq:sys_flow_p})-(\ref{eq:sys_flow_theta}) and~(\ref{eq:sys_jump_p})-(\ref{eq:sys_jump_theta}) and initial conditions satisfying $\forall i,\, i\in\mathcal{D}_c(0)$ is convergent.
\end{theorem}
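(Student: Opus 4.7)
The plan is to prove convergence by combining nonnegative-matrix-theory consensus arguments on the jump updates of $\vartheta$ with a direct analysis of the flow, then invoking well-posedness to eliminate the repulsive term asymptotically. I would first shift coordinates: set $\bar{p}_i := p_i - d_i$ and $\bar{\vartheta}_i := \vartheta_i - d_i$, so that $\dot{p}_i\to 0$ is equivalent to $\dot{\bar{p}}_i\to 0$. In these coordinates, $\mu_j(t_k)$ in (\ref{eq:mu_transmit}) equals $\bar{p}_j(t_k)$ for $j\in\mathcal{D}_c(t_k)$ and $\bar{\vartheta}_j(t_k)$ otherwise, and (\ref{eq:sys_jump_theta}) rewrites as $\bar{\vartheta}_i(t_k^+) = \sum_j h_{ij}(k)\,\mu_j(t_k)$. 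Since $H(k):=[h_{ij}(k)]$ is row-stochastic with strictly positive diagonal (from $\xi_{ii}>0$) and $\mathcal{G}(t_k)$ is strongly connected, $H(k)$ is primitive for every $k$.

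Next, I would express $\bar{p}_i(t_{k+1})$ in terms of $\bar{p}_i(t_k^+)$ and $\bar{\vartheta}_i(t_k^+)$, using that $\vartheta_i$ is constant on $(t_k,t_{k+1}]$ by (\ref{eq:sys_flow_theta}). For a persistently-safe agent, (\ref{eq:sys_flow_p}) gives exponential tracking, so $\bar{p}_i(t_{k+1}) = (1-\alpha_k)\bar{p}_i(t_k^+) + \alpha_k\,\bar{\vartheta}_i(t_k^+)$ with $\alpha_k\in(0,1)$ uniformly bounded away from $0$ and $1$ (using $\underline{\Delta}\le t_{k+1}-t_k\le\bar{\Delta}$). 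For a freshly-safe agent, Proposition~\ref{prop:p_tk_theta_tk} forces $\bar{p}_i(t_{k+1})=\bar{\vartheta}_i(t_{k+1})$. For an in-danger agent, the exponential-tracking identity holds modulo an additive convolution against $r_i$, whose magnitude is controlled by the collision-avoidance mechanism so long as $\|l_{ij}\|>\delta_s$.

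The heart of the argument is a Wolfowitz/Hajnal-style contraction on the transmitted values. Define, coordinate-wise,
$$D_k^x := \max_j \mu_j^x(t_k) - \min_j \mu_j^x(t_k), \qquad D_k^y := \max_j \mu_j^y(t_k) - \min_j \mu_j^y(t_k).$$
Because each $\bar{\vartheta}_i(t_k^+)$ is a convex combination of the $\mu_j(t_k)$ and each $\bar{p}_i(t_{k+1})$ is a convex combination of $\bar{p}_i(t_k^+)$ and $\bar{\vartheta}_i(t_k^+)$ (plus, for in-danger agents, a bounded $r_i$-term), primitivity of $H(k)$ and strong connectivity of $\mathcal{G}(t_k)$ yield a uniform contraction of $D_k^x$ and $D_k^y$ across jumps, driving both to $0$.

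The main obstacle is handling the coupling between the repulsion $r_i$ and the consensus contraction: in-danger agents broadcast $\bar{\vartheta}_j$ rather than $\bar{p}_j$, so their repulsion-driven drift does not directly enter the consensus update, yet it can re-enter the transmitted data at later instants when they become safe. To close this gap, I would exploit well-posedness ($\|d_i-d_j\|>\delta_c$): once $D_k^x, D_k^y$ are small and $\bar{p}_i$ is close to $\bar{\vartheta}_i$, pairwise distances exceed $\delta_c$, so $r_i\equiv 0$ eventually, the flow reduces to pure exponential tracking of an already-convergent $\vartheta_i$, and both $\bar{p}_i-\bar{\vartheta}_i\to 0$ and $r_i\to 0$. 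This yields $\dot{p}_i\to 0$ for every $i$, as required by Definition~\ref{def:local_min}.
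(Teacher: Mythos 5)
Your consensus half is essentially the paper's own route: after the same coordinate shift, the paper shows (Propositions~\ref{prop:transmit_convex} and~\ref{prop:p_mu}) that the \emph{transmitted} values obey the exact linear recursion $\mu^x(t_k) = \left[ \Lambda_k + (\mathbb{I}_n-\Lambda_k)H_{k-1} \right]\mu^x(t_{k-1})$ with row-stochastic, irreducible, primitive matrices, and concludes consensus of $\tilde\vartheta$ via results on infinite products of such matrices; your Hajnal-style diameter contraction is the same mechanism. One thing you can simplify rather than patch: the repulsion never enters the broadcast data at all. In-danger agents transmit $\tilde\vartheta_j$ by (\ref{eq:mu_transmit}), and a freshly-safe agent satisfies $p_i(t_{k+1})=\vartheta_i(t_{k+1})$ by Proposition~\ref{prop:p_tk_theta_tk}, so only safe-throughout agents contribute their positions, and those are exact convex combinations. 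There is no ``additive convolution against $r_i$'' to control --- which is fortunate, since $r_i$ admits no a priori uniform bound (it blows up as $\norm{l_{ij}}\to\delta_s$). A secondary caveat: your ``uniform contraction'' per step would require the normalized coefficients $h_{ij}$ to be bounded away from zero, which random fading does not guarantee; this is why one appeals to Wolfowitz-type product theorems on primitive stochastic matrices rather than a per-step Hajnal bound.

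The genuine gap is your final step. You claim that well-posedness ($\norm{d_i-d_j}>\delta_c$) forces $r_i\equiv 0$ eventually, because ``once $D_k$ is small and $\bar p_i$ is close to $\bar\vartheta_i$, pairwise distances exceed $\delta_c$.'' But ``$\bar p_i$ close to $\bar\vartheta_i$'' is precisely what cannot be established for agents that remain in danger: their flow is $r_i(t)-a(p_i(t)-\vartheta_i(t))$ by (\ref{eq:sys_flow_p}), and the closed loop can settle at a force-balance equilibrium where $a(p_i(t)-p_i^*)=r_i(t)\neq 0$ with some pairwise distances in $(\delta_s,\delta_c]$. Well-posedness only guarantees that the \emph{target} configuration is repulsion-free; it does not exclude other equilibria. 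Indeed, your argument would prove a strengthening of Theorem~\ref{thm:convergence} --- convergence to the desired formation --- that the paper explicitly shows is false: the symmetric four-agent experiment (Figure~\ref{fig:agents_traj_lm}) converges with $\dot p_i\to 0$ but with residual repulsion and positions inconsistent with the formation. The paper closes the proof differently: once $\tilde\vartheta_i\to\bar p^*$, it introduces the Lyapunov function $V(t)$ of (\ref{eq:lyapunov}), combining the quadratic tracking term with the potential $\rho(p(t))$, shows $\dot V(t)\leq 0$ with $\dot V(t)=0$ if and only if all $\dot p_i(t)=0$ (see (\ref{eq:lyapunov_dot_iff_p_dot}), including a separate bookkeeping for freshly-safe agents via the reparametrization $\alpha_i(t)$), and concludes only convergence in the sense of Definition~\ref{def:local_min}, conceding possible local minima. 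To repair your proof, replace the ``$r_i\equiv 0$ eventually'' claim with such a Lyapunov/invariance argument on the limiting system $\vartheta_i\equiv \bar p^*+d_i$.
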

\begin{proof}
    In order to prove Theorem~\ref{thm:convergence}, we first transform coordinates:
    \begin{align}
        \tilde{p}_i(t) &:= p_i(t) - d_i \\
        \tilde{\vartheta}_i(t) &:= \vartheta_i(t)-d_i.
    \end{align}
    From this it is clear that $\tilde{p}(t) = p(t)-d$, $\tilde{p}_i^x(t) = p_i^x(t) - d_i^x$ and $\tilde{p}^x(t) = p^x(t)-d^x$, where $d=[d_1^T,\dots,d_n^T]^T$ and $d^x=[d_1^x,\dots,d_n^x]^T$, while the variables $\tilde{\vartheta}(t)$, $\tilde{\vartheta}_i^x(t)$ and $\tilde{\vartheta}^x(t)$ are defined analogously.

    With these definitions at hand, we can rewrite the system dynamics of the jump-flow system as
    \begin{align}
    &\dot{\tilde{p}}_i(t) \nonumber \\
    &= \begin{cases}
            -a(\tilde{p}_i(t)-\tilde{\vartheta}_i(t))&\text{if }i\in\mathcal{D}_c(t_k,t)
            \\
            -\cfrac{\tilde{p}_i(\tau_t^i)-\tilde{\vartheta}_i(\tau_t^i)}{t_{k+1}-\tau_t^i} & \text{if } i\in\mathcal{D}_c(t) \setminus \mathcal{D}_c(t_k,t)\\
            r_i(t)-a(\tilde{p}_i(t)-\tilde{\vartheta}_i(t))&\text{if }i\in\mathcal{N}\setminus\mathcal{D}_c(t)
        \end{cases}
    \label{eq:sys_flow_p_tilde}
    \\
    &\dot{\tilde{\vartheta}}_i(t) = 0
    \label{eq:sys_flow_theta_tilde}
    \end{align}
    between update times, i.e. $\forall t\in(t_k^+,t_{k+1}]$, and
    \begin{align}
        \tilde{p}_i(t_k^+) &= \tilde{p}_i(t_k) \label{eq:sys_jump_p_tilde}\\
        \tilde{\vartheta}_i(t_k^+) &= \zeta_i(t_k) \label{eq:sys_jump_theta_tilde}
    \end{align}
    at update times $t_k, k\in\N_0$.
    Furthermore, we can rewrite (\ref{eq:zeta}) in terms of the transformed variables as
    \begin{equation} \label{eq:zeta_tilde}
        \zeta_i(t_k) = \sum_{j\in\mathcal{D}_c(t_k)} h_{ij}(t_k)\tilde p_j(t_k) + \sum_{j\in\mathcal{N}\setminus\mathcal{D}_c(t_k)} h_{ij}(t_k)\tilde\vartheta_j(t_k).
    \end{equation}
    For further analysis, we require the following proposition.

    \begin{proposition} \label{prop:transmit_convex}
        At every update time $t_k$, $k\in\N_0$, each agent transmits a convex combination of $\tilde{p}_i(t_{k-1})$ and $\tilde\vartheta_i(t_k)$.
    \end{proposition}
    \begin{proof}
        To proof this proposition, we will show that $\mu_i(t_k)$ can be expressed as
        \begin{equation} \label{eq:mu_transmit_convex}
            \mu_i(t_k) = \lambda_i(t_k) \tilde{p}_i(t_{k-1}) + (1-\lambda_i(t_k))\tilde{\vartheta}_i(t_k),
        \end{equation}
        where $\lambda_i(t_k)\in [0,1)$.
        For $i\in\mathcal{N}\setminus\mathcal{D}_c(t_k)$, according to (\ref{eq:mu_transmit}), $\mu_i(t_k)=\tilde{\vartheta}_i(t_k)$, hence (\ref{eq:mu_transmit_convex}) holds for $\lambda_i(t_k)=0$. 

        For $i\in\mathcal{D}_c(t_k)\setminus\mathcal{D}_c(t_{k-1},t_k)$. Proposition~\ref{prop:p_tk_theta_tk} implies $p_i(t_k) = \vartheta_i(t_k)$, hence (\ref{eq:mu_transmit}) implies $\mu_i(t_k)=\tilde{\vartheta}_i(t_k)$ as well.
        Therefore, (\ref{eq:mu_transmit_convex}) holds for $\lambda_i(t_k)=0$.

        Finally, for $i\in\mathcal{D}_c(t_{k-1}^+,t_k)$, agent dynamics in the interval $(t_{k-1}^+,t_k]$ is governed by
        \begin{equation}
            \dot{\tilde p}_i(t) = -a(\tilde p_i(t)-\tilde\vartheta_i(t)).
        \end{equation}
        Consequently, $\forall t\in(t_{k-1}^+,t_k]$,
        \begin{equation}
            \tilde{p}_i(t) = e^{-a(t-t_{k-1})}\tilde{p}_i(t_{k-1}) + (1-e^{-a(t-t_{k-1})})\tilde\vartheta_i(t_k)
        \end{equation}
        and
        \begin{equation}
            \tilde{p}_i(t_k) = e^{-a(t_k-t_{k-1})}\tilde{p}_i(t_{k-1}) + (1-e^{-a(t_k-t_{k-1})})\tilde\vartheta_i(t_k).
        \end{equation}
        As, for $t>t_{k-1}$, $e^{-a(t-t_{k-1})} \in (0,1)$, (\ref{eq:mu_transmit_convex}) holds for $\lambda_i(t_k)=e^{-a(t_k-t_{k-1})}$. The proposition is therefore proven.
    \end{proof}
    Given Proposition~\ref{prop:transmit_convex}, (\ref{eq:zeta_tilde}) can be reformulated as
    \begin{align}
        \zeta_i(t_k) &= \nonumber\\ 
        &\sum_{j\in\mathcal{N}} h_{ij}(t_k) 
        \left( \lambda_j(t_k) \tilde{p}_j(t_{k-1}) + (1-\lambda_j(t_k))\tilde{\vartheta}_j(t_k) \right),
        \label{eq:zeta_convex}
    \end{align}
    with $\forall i,\,\lambda_i(t_k)\in [0,1)$.

    In the following we will show that for $t\to\infty$ consensus is achieved in $\tilde\vartheta_i(t)$, i.e. $\forall (i,j),\, \tilde\vartheta_i(t) = \tilde\vartheta_j(t)$.
    For simplicity reasons we will only consider the $x$-coordinates of $\tilde\vartheta_i(t)$ here, however, the analysis is identical for the $y$-coordinate.

    From (\ref{eq:sys_flow_theta_tilde}), (\ref{eq:sys_jump_theta_tilde}) and~(\ref{eq:zeta_convex}) we can formulate the update law for $\tilde\vartheta(t_k^+)$ as
    \begin{align}
        \tilde\vartheta^x(t_{k+1}) &= \tilde\vartheta^x(t_k^+) = \zeta^x(t_k) \nonumber\\
        &= H_k \mu^x(t_k) \nonumber\\
        &= H_k\left[ \Lambda_k\tilde{p}^x(t_{k-1}) + \left( \mathbb{I}_n-\Lambda_k \right) \tilde\vartheta^x(t_k) \right],
        \label{eq:theta_tilde_x}
    \end{align}
    where $[H_k]_{ij}=h_{ij}(t_k)$ and $\Lambda_k=\diag(\lambda_1(t_k),\dots,\lambda_n(t_k))$.

    \begin{proposition} \label{prop:p_mu}
        For all $k\in\N_0$, it holds that $\lambda_i(t_k)\tilde{p}_i(t_{k-1}) = \lambda_i(t_k)\mu_i(t_{k-1})$.
    \end{proposition}
    \begin{proof}
        From the proof of Proposition~\ref{prop:transmit_convex}, if $i\in\mathcal{N}\setminus\mathcal{D}_c(t_{k})$, $\lambda_i(t_k)=0$, immediately establishing the equality.

        If $i\in\mathcal{D}_c(t_k)\setminus\mathcal{D}_c(t_{k-1}^+,t_k)$, the proof of Proposition~\ref{prop:transmit_convex} again, implies that $\lambda_i(t_k)=0$, hence the equality in Proposition~\ref{prop:p_mu} is trivially satisfied.
        
        Finally, from continuity of positions at update times it follows that $i\in\mathcal{D}_c(t_{k-1}^+,t_k) \implies i\in\mathcal{D}_c(t_{k-1})$ and therefore $\mu_i(t_{k-1})=\tilde{p}_i(t_{k-1})$.
        For $i\in\mathcal{D}_c(t_{k-1}^+,t_k)$, the equality in Proposition~\ref{prop:p_mu} is therefore satisfied for any $\lambda_i(t_k)$.
    \end{proof}

    With Proposition~\ref{prop:p_mu} and the fact that $\tilde\vartheta^x(t_k) = H_{k-1}\mu^x(t_{k-1})$, (\ref{eq:mu_transmit_convex}) can be rewritten as
    \begin{equation}
        \mu^x(t_{k}) = \left[ \Lambda_k + \left( \mathbb{I}_n-\Lambda_k \right) H_{k-1} \right] \mu^x(t_{k-1}).
    \end{equation}
    Therefore, 
    \begin{equation}
        \tilde\vartheta_i(t_{k+1}) = H_k \prod_{j=1}^k \left[ \Lambda_j + \left( \mathbb{I}_n-\Lambda_j \right) H_{j-1} \right] \mu^x(0),
        \label{eq:theta_tilde_x_2}
    \end{equation}
    with $\mu^x(0) = \tilde{p}^x(0) = \tilde\vartheta^x(0)$ by definition.

    \begin{proposition} \label{prop:primitivity}
        Given of strongly connected graph $\mathcal{G}(t_k)$, the matrix $\Lambda + (\mathbb{I}_n-\Lambda)H_k$ is row-stochastic, irreducible and primitive for any diagonal matrix $\Lambda=\diag(\lambda_1,\dots,\lambda_n)$ with $\forall i,\,\lambda_{i}\in[0,1)$.
    \end{proposition}
    \begin{proof}
        As shown in Section~\ref{sec:comm_model}, $\forall i,\, \sum_{j\in\mathcal{N}} h_{ij}(t_k) = 1$, hence, the matrix $H_k$ is row-stochastic. Since $\forall i,\,\lambda_i\in[0,1)$, by matrix multiplication it is trivial to show that $\Lambda+(\mathbb{I}_n-\Lambda)H_k$ is row-stochastic as well.

        Note that the matrix $H_k$ is the weighted adjacency matrix to the graph $\mathcal{G}(t_k)$. Since $\mathcal{G}(t_k)$ is strongly connected, by~\cite[Theorem 6.2.24]{book:horn}, $H_k$ is irreducible. 
        As $\forall i,\, \lambda_i \in [0,1)$, $(\mathbb{I}_n-\Lambda)H_k$ is irreducible. 
        By~\cite[Theorem 1]{paper:schwarz}, the sum of an irreducible matrix and a nonnegative matrix is irreducible. 
        Hence, the matrix $\Lambda + (\mathbb{I}_n-\Lambda)H_k$ is irreducible.

        Lastly, by~\cite[Chapter 3, Corollary 1.1]{book:minc}, any irreducible matrix with positive trace is primitive. Since $\forall i,\, h_{ii}(t_k)>0$ by construction, $H_k$ and therefore $\Lambda + (\mathbb{I}_n-\Lambda)H_k$ have positive diagonal entries and are therefore primitive.
    \end{proof}

    \begin{proposition} \label{prop:consensus}
        Given a series of strongly connected graphs $\mathcal{G}(t_k)$ at update times $t_k$, $k\in\N_0$, the system achieves consensus in $\tilde\vartheta^x(t)$, i.e.
        \begin{equation}
            \lim_{t\to\infty} \tilde\vartheta^x(t) = 1_n\hat{\vartheta}^x,
        \end{equation}
        where $\hat{\vartheta}^x \in\R$.
    \end{proposition}
    \begin{proof}
        Since in between update times $\dot{\tilde\vartheta}=0$, the requirement can be reformulated as
        \begin{equation}
            \lim_{k\to\infty} \tilde\vartheta^x(t_k) = 1_n\hat{\vartheta}^x.
        \end{equation}
        By (\ref{eq:theta_tilde_x_2}),
        \begin{equation}
            \lim_{k\to\infty} \tilde\vartheta^x(t_k) = \lim_{k\to\infty} H_k \prod_{j=1}^k \left[ \Lambda_j + \left( \mathbb{I}_n-\Lambda_j \right) H_{j-1} \right] \tilde{p}^x(0).
        \end{equation}
        By~\cite{paper:ren},~\cite{paper:wolfowitz}, an infinite product of primitive row-stochastic square matrices of dimension $n$ converges to
        \begin{equation}
            \lim_{k\to\infty} H_k \prod_{j=1}^k \left[ \Lambda_j + \left( \mathbb{I}_n-\Lambda_j \right) H_{j-1} \right] = 1_n v^T,
        \end{equation}
        with $v\in\R^n_{>0}$ and $1_n^T v = 1$. 
        Hence, $\tilde\vartheta^x(t)$ converges to
        \begin{equation}
            \lim_{t\to\infty} \tilde\vartheta^x(t) = 1_nv^T p^x(0) = \hat{\vartheta}^x1_n,
        \end{equation}
        with $\hat{\vartheta}^x = v^T p^x(0) \coloneqq \bar{p}^x$.
    \end{proof}

    Since the same analysis and especially Proposition~\ref{prop:consensus} holds for the $y$-coordinates as well, we can generalize the result to
    \begin{equation}
        \forall i,\, \lim_{t\to\infty} \tilde\vartheta_i(t) = \bar{p}^* = (\bar{p}^x,\bar{p}^y)^T.
    \end{equation}
    This shows that $\forall i,\, \tilde\vartheta_i(t)$ converge to a common point $\bar{p}^*\in\R^2$ and the system agrees on a centroid. 
    In order to show that ${p}(t)$ converges, we consider the system for $t\to\infty$, i.e., $\forall i,\, \tilde\vartheta_i(t)=\bar{p}^*$, and define the Lyapunov function $V(t)\in\R_{\geq 0}$
    \begin{equation} \label{eq:lyapunov}
        V(t) = \frac{1}{2}a \sum_{i=1}^n ({p}_i(t)-{p}_i^*)^T({p}_i(t)-{p}_i^*) + \rho(p(t)),
    \end{equation}
    where $p_i^*=\bar{p}^*+d_i$. 
    Taking the time derivative of $V(t)$ and considering (\ref{eq:sys_flow_p}) leads to
    \begin{align}
        \dot{V}(t) &= a \sum_{i=1}^n \dot{{p}}_i^T(t)({p}_i(t)-{p}_i^*) - \dot{{p}}_i^T(t) r_i(t) \nonumber\\
        &= - \sum_{i=1}^n \dot{{p}}_i^T(t)\left[ -a({p}_i(t)-{p}_i^*) + r_i(t) \right], \nonumber\\
        &= \sum_{i=1}^n \begin{cases}
            -\dot{{p}}^T_i(t)\dot{{p}}_i(t) 
            & \text{if $\ast$} \\
            -\frac{a\alpha_i(t)( p_i(\tau_t^i)-{p}_i^*)^T(p_i(\tau_t^i)-{p}_i^*)}{t_{k+1}-\tau_t^i}
            &\text{if $\star$}.
        \end{cases}
        \label{eq:lyapunov_dot}
    \end{align}
    For notational purposes, we let $\ast$ denote $i\in\mathcal{D}_c(t_k,t)\cup\mathcal{N}\setminus\mathcal{D}_c(t)$, and $\star$ denotes $i\in\mathcal{D}_c(t)\setminus\mathcal{D}_c(t_k,t)$, where $t_k$ is the last update time before $t$.
    
    In the first line we used the fact that, by (\ref{eq:ri}),
    \begin{align}
        \frac{\text{d}}{\text{d}t} \rho(p(t)) &= \dot{p}^T(t)\frac{\partial\rho}{\partial p}(t) \nonumber\\
        &=-\dot{p}^T(t) r(t) = -\sum_{i=1}^n \dot{p}_i^T(t)r_i(t).
    \end{align}
    In the last equality of (\ref{eq:lyapunov_dot}) we expressed ${p}_i(t)$ for the case $\star$ as ${p}_i(t) = \alpha_i(t) {p}_i(\tau_t^i) + (1-\alpha_i(t)){p}_i^*$, with
    \begin{equation}
        \alpha_i(t) = 1-\frac{t-\tau_t^i}{t_{k+1}-\tau_t^i}.
    \end{equation}
    This follows immediately from (\ref{eq:sys_flow_p}) if $\vartheta_i = p_i^*$.
    Clearly, $\alpha_i(t)\in [0,1]$ for $t\in[\tau_t^i,t_{k+1}]$, and $\alpha_i(t) = 0$ if and only if $t = t_{k+1}$, i.e, $p_i(t)=p_i^*$ (Proposition~\ref{prop:p_tk_theta_tk}), implying $\dot{p}_i(t)=0$.

    Hence, all summands in (\ref{eq:lyapunov_dot}) are nonpositive and $\dot{V}(t)\leq 0$. Moreover,
    \begin{equation}
        \dot{V}(t)=0 \iff \forall i,\, \dot{p}_i(t)=0,
        \label{eq:lyapunov_dot_iff_p_dot}
    \end{equation}
    which, aside from all agents being in the desired formation, can occur if the repulsive forces exactly oppose the agents' desired movement.
    
    Therefore, it is evident that $V(t)$ decreases to a certain value $\gamma\in\R_{\geq 0}$, that is,
    \begin{equation}
        \lim_{t\to\infty} V(t) = \gamma.
    \end{equation}
    In the case that $\gamma=0$, by (\ref{eq:lyapunov}) and Definition~\ref{def:global_min}, the system reaches its targeted formation and $V(t)$ achieves a global minimum. 
    If $\gamma>0$, the function $V(t)$ is in a local minimum, and the multi-agent system has converged to positions that are not consistent with the targeted formation, i.e., 
    \begin{equation}
        \lim_{t\to\infty} p_i(t) \neq p_i^*
    \end{equation}
    for some agents $i$.

    In conclusion, we have shown Theorem~\ref{thm:convergence} (convergence of the multi-agent system (\ref{eq:sys_flow_p}), (\ref{eq:sys_flow_theta}), (\ref{eq:sys_jump_p}), (\ref{eq:sys_jump_theta}), but not necessarily to the targeted formation).

    Simulation experiments in the following section indicate that convergence to a local minimum, i.e., the system ``getting stuck'' in a steady state that does not correspond to the desired formation, may be caused by perfect symmetries in the setup. 
    This would indicate that such an outcome corresponds to a non-generic scenario. 
\end{proof}
\section{Simulation Results} \label{sec:sim}
We consider a set of $n=6$ agents with randomly chosen initial positions such that $\forall i,\, i\in\mathcal{D}_c(0)$. For all agents, we choose $a=1$. The displacement vectors $d_i$ are chosen such that the desired formation is a regular hexagon. For each agent, the safety and critical radii are $\delta_s=4$ and $\delta_c=8$. We assume a constant update interval of $\forall k\in\N_0,\, t_{k+1}-t_k=0.1$s.

At update times $t_k$, $k\in\N_0$, a network topology, i.e. $\mathcal{A}(t_k)$, is randomly chosen out of a set of five different strongly connected topologies. Furthermore, the channel fading coefficients are randomly chosen from a uniform distribution, i.e. $\forall (i,j)\in\mathcal{A}(t_k),\, \xi_{ij}(t_k) \sim \mathcal{U}(0,1)$.

The simulation is carried out using the Runge-Kutta integration method of fourth order with a step size of $10^{-3}$s and a total simulation time of $20$s. 
Figure~\ref{fig:agents_traj} shows the trajectories of the described system, where the initial and end positions are marked with circles and crosses, respectively. Clearly, the multi-agent system converges to the targeted hexagonal formation. 
The minimal distance between two agents at any time amounts to $6.18$, i.e., collisions are avoided.

\begin{figure}
    \centering
    \includegraphics[width=\linewidth]{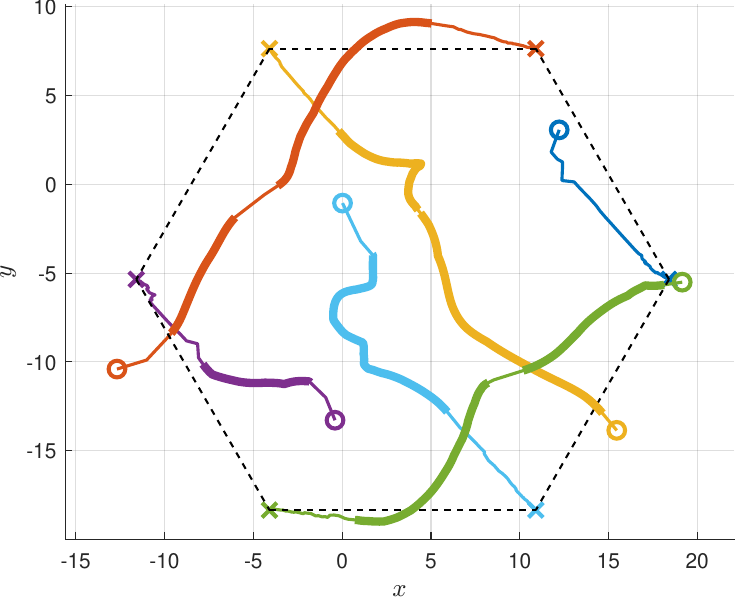}
    \caption{Trajectories of six agents in space seeking a given hexagonal-shaped formation while avoiding collisions. Circles and crosses denote initial and end positions, respectively. Wider line sections indicate when agents were in danger of colliding.}
    \label{fig:agents_traj}
\end{figure}

To highlight the efficiency of the employed broadcast communication protocol, we count the number of orthogonal transmissions until the agents agree on a common centroid and compare that to the number of transmissions if a standard orthogonal channel access protocol was to be employed.
To this end, we say the system has agreed on a centroid when for all $t>t_k$ the variance of all $\tilde\vartheta_i(t_k)$ is below a threshold of $0.01$.

In this experiment it required $71$ communications steps until the system agreed on a centroid. Utilizing the suggested OtA broadcast protocol leads to a total transmission number of $213$, since three values need to be transmitted at every communication update. 
On the other hand, if we were to employ a node-to-node communication protocol, it would take the system only $68$ communication steps to agree on a centroid. The reason for this slightly faster convergence is that with this communication protocol, each agent $i$ has knowledge of the exact transmitted values of its neighboring agents at every update time $t_k$, thus enabling choosing $\tilde\vartheta_i(t_k)$ as the arithmetic mean of all $\mu_j(t_k)$, $j\in\mathcal{N}_i$.
However, counting the number of individual transmissions leads to a total of $2214$ transmissions, showing that while the agents agree slightly faster on a common centroid, the number of transmissions increases tenfold.

Another conceivable approach is to employ a broadcast communication protocol which does not exploit interference, but rather uses time- or frequency-division multiplexing to exchange information among agents. 
In this case, at each update time, each agent would be assigned two orthogonal channels for a total of $2n$ orthogonal channels (compared to the three orthogonal channels required if interference is exploited). 
Similar to the case of node-to-node communication, this would allow for a slightly faster agreement of the agents, while keeping the number of individual transmissions low, and in this case lower when compared to the employed protocol in this paper, with $136$ individual transmissions.
However, for systems with a large number of agents, the superior scalability of the OtA broadcast approach, employed in this paper, will result in a drastically smaller number of required orthogonal channels.
This shows that the described broadcast protocol requires considerably less resources than both standard approaches, and this advantage can be expected to grow with the number of agents.

As pointed out in Section~\ref{sec:convergence}, the proposed control algorithm, does not ensure convergence to the desired formation. 
Instead, it may converge to a local minimum, in which the agents reach final positions which are not consistent with the targeted formation. 
This case is illustrated in Figure~\ref{fig:agents_traj_lm} for a numerical experiment with a system of four agents. 
We use the same parameters as in the experiment above, but with symmetric initial positions and a constant fully connected and balanced network topology, i.e., $\forall k\in\N_0,\,\forall (i,j),\, h_{ij}(t_k)=0.25$. 
In addition to the initial and final positions, the targeted positions of each agent are marked by diamonds of the respective color.
Clearly, due to the perfect symmetry of the numerical experimental setup and chosen network topology, the collision avoidance forces prevent the system from converging to the targeted formation.

However, employing a random sequence of strongly connected network topologies, as in the first experiment, introduces sufficient asymmetries into the system, allowing it to converge to the targeted formation.
Figure~\ref{fig:agents_traj_lm_solved} shows the agents reaching the desired formation in this case.
This shows that while the proposed control algorithm in general does not ensure convergence to the desired formation, small asymmetries in the experimental setup may suffice to avoid local minima and to achieve the targeted formation.
This would be expected in practical applications, where such imperfections are an inherent part of the system through, e.g., time-varying channel coefficients.

\begin{figure}
    \centering
    \begin{subfigure}{\linewidth}
        \includegraphics[width=\linewidth]{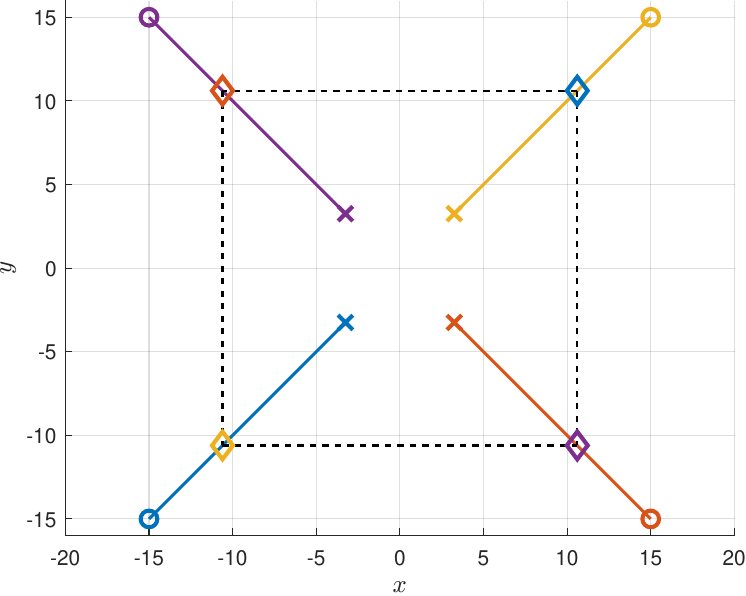}
        \caption{Agents converge to a local minimum.}
        \label{fig:agents_traj_lm}
        \vspace*{1em}
    \end{subfigure}
    \begin{subfigure}{\linewidth}
        \includegraphics[width=\linewidth]{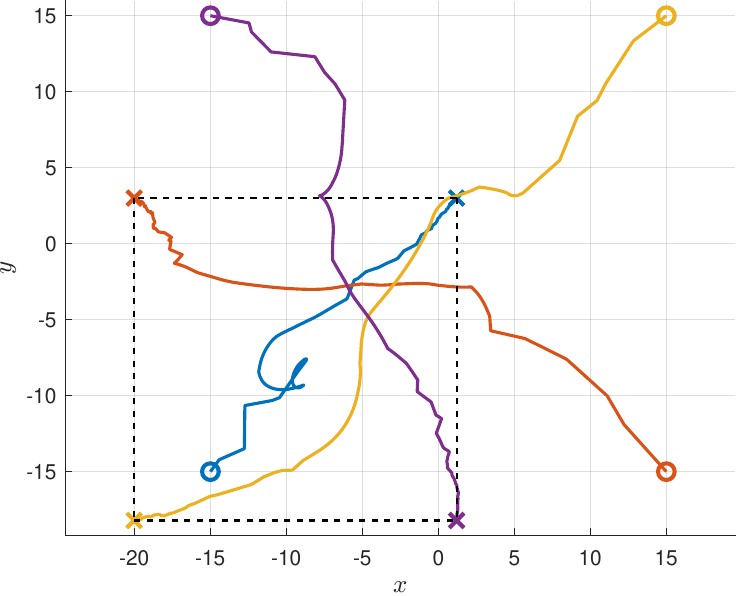}
        \caption{Local minimum is resolved due to asymmetries in the network topologies.}
        \label{fig:agents_traj_lm_solved}
    \end{subfigure}
    \caption{Trajectories of four agents in space seeking a given square-shaped formation while avoiding collisions.}
\end{figure}
\section{Conclusion} \label{sec:conclusion}
In this paper, we have introduced a consensus-based control strategy, tailored for multi-agent systems with single-integrator dynamics. Central to our approach is the use of an OtA broadcast protocol. This protocol exploits the superposition property of the wireless channel, leading to a drastic reduction of transmissions compared to approaches that avoid interference by using multiplexing.

We have advanced the results of~\cite{paper:molinari} by integrating a collision avoidance mechanism using artificial potential fields. This enhancement not only guarantees the safety of the agents but also demonstrates the adaptability and practicality of our control strategy in real-world scenarios.

A key finding of our research is the superior efficiency of our proposed controller compared to existing state-of-the-art methods. Notably, this efficiency becomes more pronounced as the complexity of the network increases in terms of the number of agents involved. This scalability is a crucial advantage, particularly in applications involving large-scale systems~\cite{paper:molinari}.

However, it is important to acknowledge that while our controller assures convergence, in certain ``pathological'' cases, the system may settle into a local minimum where agents cease movement without achieving the desired formation. Addressing this limitation by identifying and avoiding such cases forms a pivotal part of our future work. We also aim to extend our results to a broader class of systems, including agents with more complex dynamics, and to undertake experimental validation of our proposed controller using a group of real-world mobile robots.

\bibliographystyle{ieeetr}
\bibliography{lib}
\end{document}